% This is samplepaper.tex, a sample chapter demonstrating the
% LLNCS macro package for Springer Computer Science proceedings;
% Version 2.21 of 2022/01/12
%
\documentclass[a4paper,UKenglish,hyperref,cleveref, autoref, thm-restate]{lipics-v2021}
\bibliographystyle{plainurl}% the mandatory bibstyle

\title{Characterising Robust Instances of Ultimate Positivity for Linear Dynamical Systems}
\titlerunning{Robust Ultimate Positivity}

%\subtitle{The frontiers of decidability for explicitly given neighbourhoods}

\author{Mihir Vahanwala}{Max Planck Institute for Software Systems, Saarland Informatics Campus, Germany}{mvahanwa@mpi-sws.org}{https://orcid.org/0009-0008-5709-899X}{}

\authorrunning{M.\ Vahanwala} %TODO mandatory. First: Use abbreviated first/middle names. Second (only in severe cases): Use first author plus 'et al.'

\Copyright{Mihir Vahanwala} %TODO mandatory, please use full first names. LIPIcs license is "CC-BY";  http://creativecommons.org/licenses/by/3.0/

\ccsdesc[100]{Theory of computation; Logic and verification} %TODO mandatory: Please choose ACM 2012 classifications from https://dl.acm.org/ccs/ccs_flat.cfm 

\keywords{Linear Dynamical Systems, Verification, Robustness, Ultimate Positivity}

\nolinenumbers

\usepackage{amsfonts,stmaryrd,mathtools}
\usepackage{amsmath}
\usepackage{amssymb}
\usepackage{wrapfig}
\usepackage[ruled,vlined,linesnumbered,titlenotnumbered,noend]{algorithm2e}
\usepackage{xcolor}
% \usepackage{paralist}
%% Some recommended packages.
\usepackage{booktabs}   %% For formal tables:
                        %% http://ctan.org/pkg/booktabs
\usepackage{subcaption} %% For complex figures with subfigures/subcaptions
                        %% http://ctan.org/pkg/subcaption
\captionsetup{compatibility=false}
\usepackage{pgf}
\usepackage{tikz}
\usepackage{listings}
\usepackage[colorinlistoftodos,prependcaption,textsize=tiny]{todonotes}
\usepackage{float}
\usepackage{tcolorbox}
\usepackage{bbm}
\usepackage{hyperref}
\usepackage{cleveref}
\usepackage{lineno}

\newcommand{\naturals}{\mathbb{N}}
\newcommand{\integers}{\mathbb{Z}}
\newcommand{\reals}{\mathbb{R}}

\usepackage{graphicx}

\begin{document}
\maketitle

\begin{abstract}
Linear Dynamical Systems, both discrete and continuous, are invaluable mathematical models in a plethora of applications such the verification of probabilistic systems, model checking, computational biology, cyber-physical systems, and economics. We consider discrete Linear Recurrence Sequences and continuous C-finite functions, i.e.\ solutions to homogeneous Linear Differential Equations. The Ultimate Positivity Problem gives the recurrence relation and the initialisation as input and asks whether there is a step $n_0$ (resp.\ a time $t_0$) such that the Linear Recurrence Sequence $u[n] \ge 0$ for $n > n_0$ (resp.\ solution to homogeneous linear differential equation $u(t) \ge 0$ for $t > t_0$). There are intrinsic number-theoretic challenges to surmount in order to decide these problems, which crucially arise in engineering and the practical sciences. In these settings, the difficult corner cases are seldom relevant: tolerance to the inherent imprecision is especially critical. We thus characterise \textit{robust} instances of the Ultimate Positivity Problem, i.e.\ inputs for which the decision is locally constant. We describe the sets of Robust YES and Robust NO instances using the First Order Theory of the Reals. We show, via the admission of quantifier elimination by the First Order Theory of the Reals, that these sets are semialgebraic. 
\end{abstract}

\section{Introduction}
Discrete Linear Dynamical Systems are essentially captured by Linear Recurrence Sequences (LRS). An LRS $(u[n])_{n\in \naturals}$ of order $k$ is a sequence satisfying a recurrence relation 
$$
u[n+k] + c_{k-1}u[n+k-1] + \dots + c_1u[n+1] + c_0u[n] = 0
$$
for all $n \in \naturals$. It is uniquely specified by $(c, v) \in \reals^{2k}$ where the coefficients $c = (c_0, c_1, \dots, c_{k-1})$ and the initial terms $v = (u[0], u[1], \dots, u[k-1])$. In the continuous setting, the dynamics are described in terms of derivatives. We use $f^{(k)}$ to denote the $k^{th}$ derivative of a $k$-times differentiable function $f$. We take $f^{(0)} = f$. Our fundamental model of the continuous setting is the C-finite function. A C-finite function $u: \reals_{\ge 0} \rightarrow \reals$ of order $k$ is a function satisfying a homogeneous linear differential equation
$$
u^{(k)}(t) + c_{k-1}u^{(k-1)}(t) + \dots + c_1u^{(1)}(t) + c_0u^{(0)}(t) = 0
$$
for all $t \in \reals_{\ge 0}$. It is uniquely specified by $(c, v) \in \reals^{2k}$ where the coefficients $c = (c_0, c_1, \dots, c_{k-1})$ and the initialisation $v= (u^{(0)}(0), u^{(1)}(0), \dots, u^{(k-1)}(0))$.

Given the ubiquity of LRS and C-finite functions in mathematics, science, engineering and economics, the decision problems associated with them are naturally important. These include the Positivity Problem (is the sequence or function always non-negative?), the Ultimate Positivity Problem (does there exist an iterate or point in time, beyond which the sequence or function is non-negative?), and the Skolem Problem (does the sequence or function have a zero in some interval?). Unfortunately, a general procedure to decide these problems seems to elude contemporary number theory. The authors of \cite{mignotte} and \cite{vereshchagin} independently proved the Skolem problem to be decidable for LRS of order four. Positivity and Ultimate Positivity are decidable for LRS of order five, but decidability at orders six and higher would entail significant number-theoretic breakthroughs \cite{joeljames3}. The spectral restriction to \textit{simple} LRS results in the decidability of Positivity at order nine \cite{ouaknine2014positivity}, and the full decidability of Ultimate Positivity \cite{ouaknine2014ultimate}. The state of the art on the continuous front is the decidability of Ultimate Positivity at order eight, and number-theoretic hardness at orders nine and higher \cite{continuous2023}.

\paragraph*{Motivation and Related Work}
Decidability is conventionally discussed in the context of precise rational or real algebraic input. A common theme is that the number-theoretic hardness rarely raises its head: hard instances constitute a set of zero Lebesgue measure. Practitioners need contend with the imprecision and uncertainty inherent in their settings, and would sensibly err on the side of caution in scenarios where making a mathematically sound decision is infeasible. These arguments have motivated the recent attention to the \textit{robust} variants of the problem: intuitively, they ask whether the decision remains the same despite small perturbations to the input.

The works \cite{originalstacs,originalarxiv,pseudo21} investigate discrete LRS. They keep the recurrence fixed, and ask whether there is an $\varepsilon$ such that the given initialisation satisfies the decision problem with tolerance $\varepsilon$: i.e.\ $\varepsilon$ perturbations to the initialisation (either once at the start, or at every step of the trajectory) must result in sequences that still satisfy the decision problem. This paradigm uses logic to navigate the spectral intricacy that it runs into.

On the other hand, \cite{N21} introduced an even more encompassing notion of robustness: it considers a model of computation that accepts arbitrary real numbers in both the recurrence and initialisation components of the input. This implicitly builds robustness into the setting: a terminating decision procedure cannot read the input to infinite precision. This means that decisions can only be made for instances where the answer is locally constant, and finite precision suffices to detect this: we call these robust instances. Recently, \cite{neumann2023} showed it can be efficiently detected when the input is a Robust YES or a Robust NO instance of Ultimate Positivity for C-finite functions. The detection procedure can be adapted to LRS as well. 

\paragraph*{Our contribution}
In this paper, we consider the problem of Ultimate Positivity in both the discrete and continuous settings. We concern ourselves with the theoretical characterisation of Robust YES and Robust NO instances. Formally:
\begin{definition}[Robust Ultimate Positivity]
An instance $(c, v) \in \reals^{2k}$ of Ultimate Positivity is robust if there exists an $\varepsilon$-neighbourhood of $(c, v)$ such that an arbitrary $(c', v')$ in the neighbourhood is a YES instance if and only if $(c, v)$ is.
\end{definition}

\begin{figure}[h]
\centering
\includegraphics[width=0.85\textwidth]{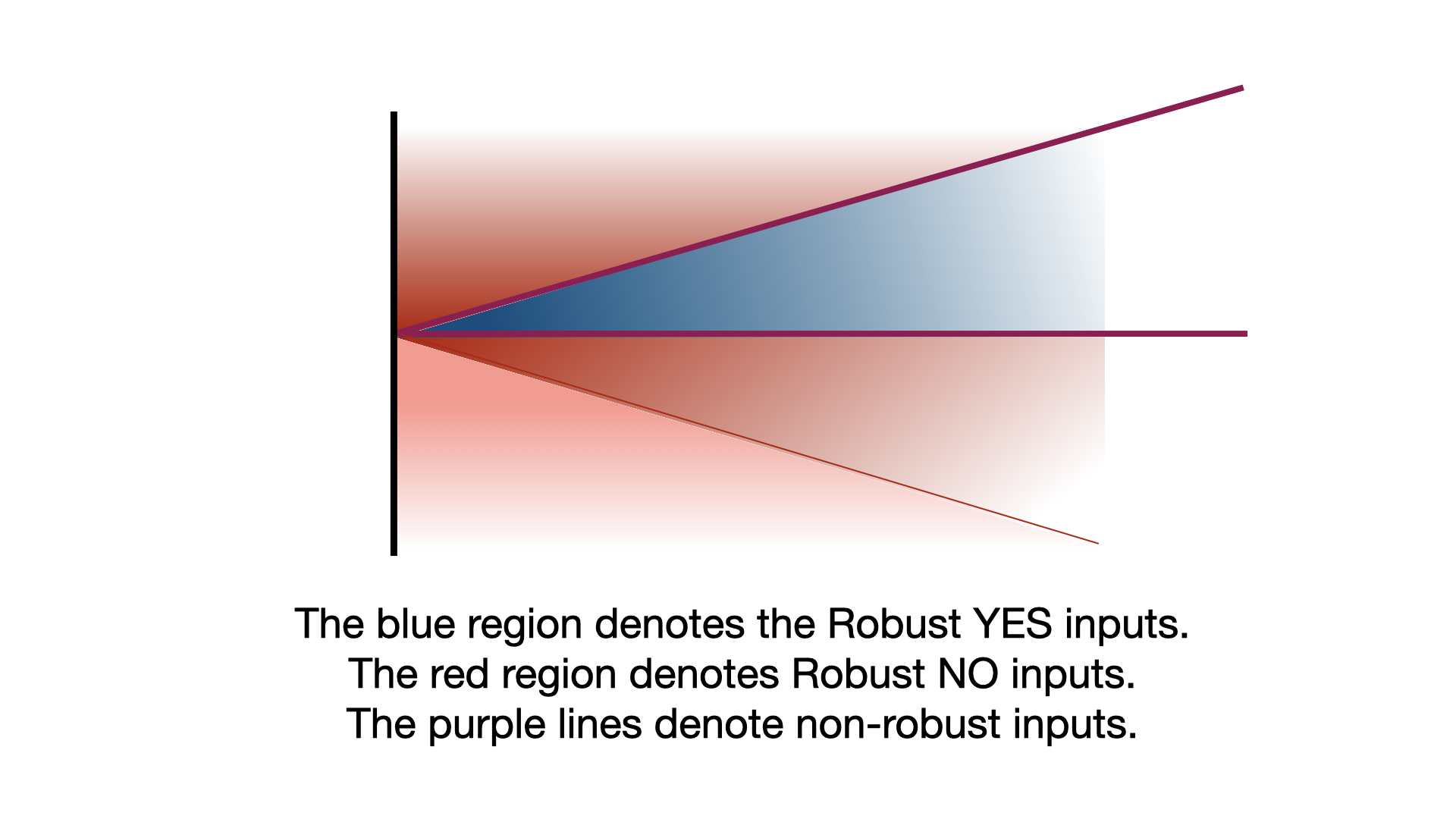}

\caption{Intuiting the underlying topology}
\label{fig:intuition}
\end{figure}

Figure \ref{fig:intuition} illustrates the underlying topology. In the measure theoretic sense, almost all inputs are Robust Instances: those in the relative interior of the sets of YES and NO instances. Our main technical result is that the sets of robust instances are semialgebraic.

\begin{theorem}[Main result]
For all $k \in \naturals$, the sets of Robust YES and Robust NO instances of the Ultimate Positivity Problem for both discrete and continuous Linear Dynamical Systems of order $k$ are semialgebraic.
\end{theorem}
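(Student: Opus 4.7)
The plan is to exhibit, for each value of the order $k$, a first-order formula over $(\reals, +, \cdot, <, 0, 1)$ with free variables $(c, v) \in \reals^{2k}$ that defines the set of Robust YES (resp.\ Robust NO) instances, and then invoke Tarski--Seidenberg quantifier elimination to conclude that the defined sets are semialgebraic. The naive encoding ``$\exists \varepsilon > 0 : \forall (c',v')$ in the $\varepsilon$-ball around $(c,v) : (c',v')$ is ultimately positive'' is not directly usable, because ultimate positivity is not manifestly expressible in the First-Order Theory of the Reals: the closed form of $u[n]$ (resp.\ $u(t)$) involves powers of algebraic numbers and exponential/trigonometric terms, none of which live in FOTR.

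The way around this is to encode the trajectory not via its transcendental closed form, but via the underlying algebraic data. I would introduce existentially quantified auxiliaries $\alpha_j, \beta_j$ ($1 \le j \le k$) standing for the real and imaginary parts of the characteristic roots, constrained by the requirement that the characteristic polynomial $p_c(X) = X^k + c_{k-1}X^{k-1} + \dots + c_0$ factor accordingly---a conjunction of polynomial identities in $(c, \alpha, \beta)$. I would further introduce existentially quantified coefficients encoding the polynomial factors in the modal decomposition of the trajectory; these are determined by a linear system in $v$ with coefficients polynomial in $(\alpha, \beta)$, obtained from matching initial conditions.

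With the trajectory encoded algebraically, I would characterise robust ultimate positivity via the \emph{dominant} roots---those maximising $\alpha_j^2 + \beta_j^2$ in the discrete setting, respectively $\alpha_j$ in the continuous setting---together with their associated coefficients. A robust YES instance is captured by a Boolean combination of polynomial (in)equalities expressing that the dominant contribution is strictly positive in a manner preserved under perturbation: the prototypical case is a unique strictly dominant real positive root (or, in the continuous case, a unique strictly dominant real root) whose leading coefficient is strictly positive. Robust NO instances correspond to configurations where either the dominant contribution is strictly negative, or the dominant mode is non-real and oscillates with amplitude bounded away from zero, forcing $u$ to take negative values arbitrarily far out in a perturbation-stable fashion.

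The hard part will be pinning down precisely which structural configurations are robust, as opposed to lying on the delicate boundary: perturbations can split a repeated root, shift a real root into the complex plane (or vice versa), and zero out or reveal a hidden leading coefficient. Enumerating the perturbation-stable configurations and expressing them as semialgebraic conditions in $(c, v, \alpha, \beta, \text{coefficients})$ is the technical crux. Once this is done, each of Robust YES and Robust NO is written as $\exists\,(\text{auxiliaries}) : \Phi(c, v, \text{auxiliaries})$ with $\Phi$ a quantifier-free formula of $(\reals, +, \cdot, <)$, and Tarski--Seidenberg eliminates the existentially quantified auxiliaries to produce a quantifier-free formula in $(c, v)$ alone, witnessing semialgebraicity.
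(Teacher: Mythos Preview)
Your overall strategy—characterise robust instances via algebraic conditions on the dominant characteristic root and its leading coefficient in the exponential-polynomial solution, encode these in the First-Order Theory of the Reals, and invoke Tarski--Seidenberg—is exactly the paper's, and your sketch of the characterisation (unique strictly dominant simple real root with positive leading coefficient for Robust YES; strictly negative or oscillatory dominant contribution for Robust NO) matches what the paper adapts from \cite{neumann2023}. Where you diverge is in the encoding: you quantify existentially over roots $\alpha_j + i\beta_j$ and modal coefficients $b_{ij}$, constrained via factorisation identities and a linear system in $v$. The paper instead passes through the Z-transform (discrete) or Laplace transform (continuous), writing the transform as $\psi_{c,v}(z)/\chi_c(z)$ (resp.\ $\phi_{c,v}(z)/\chi_c(z)$) with a numerator whose coefficients are \emph{directly} polynomial in $(c,v)$; a short lemma then shows that the sign of the leading modal coefficient at a dominant real root $\rho$ equals the sign of $\psi_{c,v}(\rho)$ (resp.\ $\phi_{c,v}(\rho)$), and the nonzero-polynomial test needed for the oscillatory Robust NO case reduces to a GCD condition on the numerator and $\chi_c$. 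This buys the paper a single uniform polynomial criterion that sidesteps precisely the difficulty you flag—that the linear system determining the $b_{ij}$ changes shape with the multiplicity pattern, which is itself discontinuous in $c$. Your route can be completed by disjuncting over multiplicity patterns, but the generating-function device is cleaner and avoids that case split entirely.
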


We observe that by the very definition of robustness, these sets must be open and full-dimensional, and thus testing membership, in principle, only involves checking finitely many polynomial \textit{inequalities}. This, indeed, is the underlying reason that enables the maximal partial decision procedure for the Ultimate Positivity Problem for real input in \cite{neumann2023}: our contribution lies in the explicit intuition, discovery, and exposition of this reason.\footnote{We do acknowledge, however, that the techniques in \cite{neumann2023} are more optimised for implementation.}

\paragraph*{Mathematical tools: Semialgebraic sets and the first order theory of the reals}
We briefly initiate the unfamiliar reader to the important terminology. A set $S \subseteq \reals^d$ is semialgebraic if it is the finite union of sets defined by equalities and inequalities involving polynomials from $\integers[X_1, \dots, X_d]$. It can be expressed in disjunctive normal form as 
$$
\bigvee_i \bigwedge_j f_{i,j}(x_1, \dots, x_d) \sim_{i,j} 0
$$
where each $\sim_{i, j}$ is either $>$ or $=$.

Let $\sigma(x_1, \dots, x_d, y_1, \dots, y_m)$ be a Boolean combination of atomic predicates of the form $g(x_1, \dots, x_d, y_1, \dots, y_m) \sim 0$, where $\sim$ is either $>$ or $=$, and $g \in \integers[X_1, \dots, X_d, Y_1, \dots, Y_d]$. A formula of the First Order Theory of the Reals has the form
$$
Q_1 y_1 \dots Q_m y_m.~ \sigma(x_1, \dots, x_d, y_1, \dots, y_d)
$$
where each $Q_i$ is a quantifier, either $\exists$ or $\forall$. The variables $x_1, \dots, x_d$ are called free variables, while $y_1, \dots, y_m$, which occur under the scope of a quantifier, are called bound variables. The truth of a formula is evaluated under an \textit{assignment} to the free variables. Two formulae with the same set of free variables are equivalent if the same assignment to both formulae always yields the same truth value. Tarski \cite{tarski} showed that the First Order Theory of the Reals admits quantifier elimination; more precisely:
\begin{theorem}[Tarski]
\label{theorem:tarski}
For every formula $$\Psi(x_1, \dots, x_d) \equiv Q_1 y_1\dots Q_m y_m. ~ \sigma(x_1, \dots, x_d, y_1, \dots, y_m)$$ with free variables $x_1,\dots, x_d$, one can compute an equivalent quantifier-free formula $\Phi(x_1, \dots, x_d)$.
\end{theorem}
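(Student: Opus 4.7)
The plan is to express ``being a Robust YES instance'' and ``being a Robust NO instance'' as formulas in the First Order Theory of the Reals; Tarski's theorem (Theorem \ref{theorem:tarski}) then produces equivalent quantifier-free formulas, which are by definition semialgebraic descriptions of the sets in question.

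First I would reformulate the problem in matrix form: $u[n] = e_1^\top A(c)^n v$ for the discrete setting, where $A(c)$ is the companion matrix associated with $c$, and $u(t) = e_1^\top e^{tA(c)} v$ for the continuous setting. Both admit a closed-form decomposition into ``modes'' of the form $p(n)\lambda^n$ or $p(t)e^{\lambda t}$, one for each eigenvalue $\lambda$ of $A(c)$ grouped by Jordan block, with $p$ a polynomial depending on $v$. The sign behaviour for large $n$ (resp.\ $t$) is governed by the \emph{dominant} modes: those with $|\lambda|$ maximal in the discrete setting, and those with $\operatorname{Re}(\lambda)$ maximal in the continuous setting.

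The central technical step is to write, by hand, a first-order formula $\Psi^+(c, v)$ asserting the existence of a \emph{spectral certificate of robust positivity}. This formula existentially quantifies over candidates for the dominant eigenvalue(s) (pairs of reals encoding complex roots where necessary), their algebraic multiplicities, and the coefficient polynomials of the corresponding modes in the closed-form expansion. Polynomial \emph{equalities} enforce that these candidates are genuine roots of the characteristic polynomial $p_c$ of the claimed multiplicities; a family of \emph{strict} polynomial inequalities enforces that every other root of $p_c$ has strictly smaller modulus (resp.\ real part); and further strict polynomial inequalities enforce that the aggregate dominant mode is eventually strictly positive (e.g.\ the unique dominant root is real and positive, with strictly positive leading coefficient in its polynomial factor). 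Symmetrically, $\Psi^-(c, v)$ asserts that the aggregate dominant mode is eventually strictly negative or strictly oscillatory. Since every inequality is strict, the sets defined by $\Psi^\pm$ are automatically open, so the $\varepsilon$-neighbourhood quantifier hiding inside the notion of robustness is built in; one could equivalently wrap the formula in $\exists \varepsilon>0.\,\forall (c', v').~\|(c',v')-(c,v)\|^2 < \varepsilon^2 \Rightarrow \Psi^+(c',v')$ and appeal to Tarski to collapse the quantifiers.

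The most delicate step, and what I view as the main obstacle, is proving the equivalence between these certificates and robustness itself. One direction---certificate implies robustness---follows from the fact that strict polynomial inequalities survive small perturbations and from a uniform asymptotic analysis of the closed form. The converse, that every robust instance admits such a certificate, requires showing that the configurations excluded by $\Psi^\pm$ (coinciding dominant moduli, dominant polynomial factor with vanishing leading coefficient, exactly cancelling oscillations, and so on) must sit on the boundary between YES and NO instances: an arbitrarily small perturbation can push them to either side. This is precisely where the number-theoretic subtlety of general Ultimate Positivity is confined; here it is harmless, because it lives on a non-robust, lower-dimensional locus that lies outside both semialgebraic sets being characterised.
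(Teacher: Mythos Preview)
Your proposal does not address the stated theorem at all. The statement you were asked to prove is Tarski's quantifier elimination theorem for the first-order theory of the reals. The paper does \emph{not} prove this theorem: it is a classical result, cited from \cite{tarski}, and used as a black box. There is consequently no ``paper's own proof'' to compare against.

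What you have written is instead a proof sketch of the paper's \emph{Main Result} (that the sets of Robust YES and Robust NO instances are semialgebraic), and indeed your very first sentence invokes Theorem~\ref{theorem:tarski} as an already-established tool. So as a proof of Theorem~\ref{theorem:tarski} itself, the proposal is circular and vacuous: you assume the conclusion in order to apply it elsewhere.

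If your intent was actually to sketch the Main Result, then your outline is broadly in the same spirit as the paper's argument (build first-order formulae $\Psi^{\pm}$ capturing robust instances, then apply Tarski), but the implementation differs: you propose a companion-matrix/Jordan-form view and existentially quantify over mode coefficients directly, whereas the paper works via the exponential-polynomial closed form together with the Z-transform and Laplace transform (\S\ref{section:genfunc}) to reduce the sign-of-leading-coefficient and nonzero-polynomial tests to explicit polynomial conditions on $\psi_{c,v}$ and $\phi_{c,v}$. Either route can work, but be aware that in your formulation the step ``existentially quantify over the coefficient polynomials of the corresponding modes'' hides real work: those coefficients are not polynomial in $(c,v)$ when eigenvalues collide, which is exactly why the paper passes through generating functions to obtain conditions that are manifestly first-order definable in $(c,v,\rho)$.
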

We note that the quantifier-free $\Phi$ can be brought in disjunctive normal form, and thus precisely fits our definition of the characterisation of a semialgebraic set. Thus, our proof strategy is to encode the sets of Robust YES and Robust NO instances using the First Order Theory of the Reals, and then leverage Theorem \ref{theorem:tarski} to prove that these sets are indeed semialgebraic.

\paragraph*{High level synopsis}
We now outline the implementation of our strategy. The exponential polynomial closed form is now a standard perspective to reason about LRS and C-finite functions. In \S \ref{section:exppoly}, we discuss how it is particularly indicative of the asymptotic behaviour of our systems. While the exponential polynomial itself is not continuous in the input, our crucial asymptotic properties indeed are. An especially elegant way to prove this is by applying control-theoretic generating functions introduced in \S \ref{section:genfunc} for a transformed perspective. We combine our observations to demonstrate the First Order definability of the sets of robust instances of Ultimate Positivity in \S\ref{section:leadterm}. Finally, in \S\ref{section:discussion}, we conclude by discussing how our result complements the state of the art.

\section{The exponential polynomial solution and generating functions}
\subsection{The exponential polynomial solution}
\label{section:exppoly}
The polynomial $\chi_c(z) = z^k + c_{k-1}z^{k-1} + \dots + c_1z + c_0$ is called the characteristic polynomial of the recurrence or differential equation. An initialisation $v$, determines a unique solution to the recurrence, which is of the form
\begin{equation}
\label{eq:exppolyds}
u[n] = \sum_i\sum_{j=0}^{m_i - 1}a_{ij}n^j \lambda_i^n
\end{equation}
and a unique solution to the differential equation, which is of the form
\begin{equation}
\label{eq:exppolyct}
u(t) = \sum_i\sum_{j=0}^{m_i - 1} b_{ij}t^j e^{\lambda_jt}
\end{equation}
where $\lambda_i$ is a root of $\chi_c$ with multiplicity $m_i$. For technical reasons, we an alternate basis of polynomials to express $\sum_j a_{ij}n^j$. We rewrite equation \ref{eq:exppolyds} as
\begin{equation}
\label{eq:exppolydscombi}
u[n] = \sum_i\sum_{j=0}^{m_i - 1}b_{ij}\binom{n+j}{j} \lambda_i^n
\end{equation}
where $\binom{n}{r}$ denotes the coefficient of $x^r$ in the binomial expansion of $(1+x)^n$. We observe that the leading coefficients in equations \ref{eq:exppolyds} and \ref{eq:exppolydscombi} have the same sign: they are related as 
$$
b_{i(m_i-1)} = (m_i-1)! \cdot a_{i(m_i-1)}.
$$
We follow \cite[Section 2]{neumann2023} in giving a high level explanation of when an instance of Ultimate Positivity is a robust YES instance, and when it is a robust NO instance. To argue that these sets of instances are semialgebraic, the main ingredient is the quantifier elimination the First Order Theory of the Reals admits. We must therefore encode each of the conditions involved as First Order Formulae. We begin doing so in this subsection.

\subsubsection{Robust YES}
For Ultimate Positivity, it suffices to have a simple real dominant root $\rho$ whose corresponding term in the solution is strictly positive, and asymptotically dominates all the other terms. For a robust YES instance of Ultimate Positivity, this is in fact \textit{necessary}: in case there is another complex root $\lambda$, the magnitude of whose contribution grows as fast as that of $\rho$, the input can be slightly perturbed to obtain a new characteristic polynomial whose dominant root is $\lambda'$, not the real $\rho$. Similarly, if $\rho$ is a repeated root, then one can disturb the input to have a solution whose leading terms oscillate in sign, e.g.\ by replacing the roots $\rho, \rho$ by $\lambda', \bar\lambda'$. See Figure \ref{fig:robustyes} for visual intuition.

\begin{figure}[h]
\centering
\includegraphics[width=0.85\textwidth]{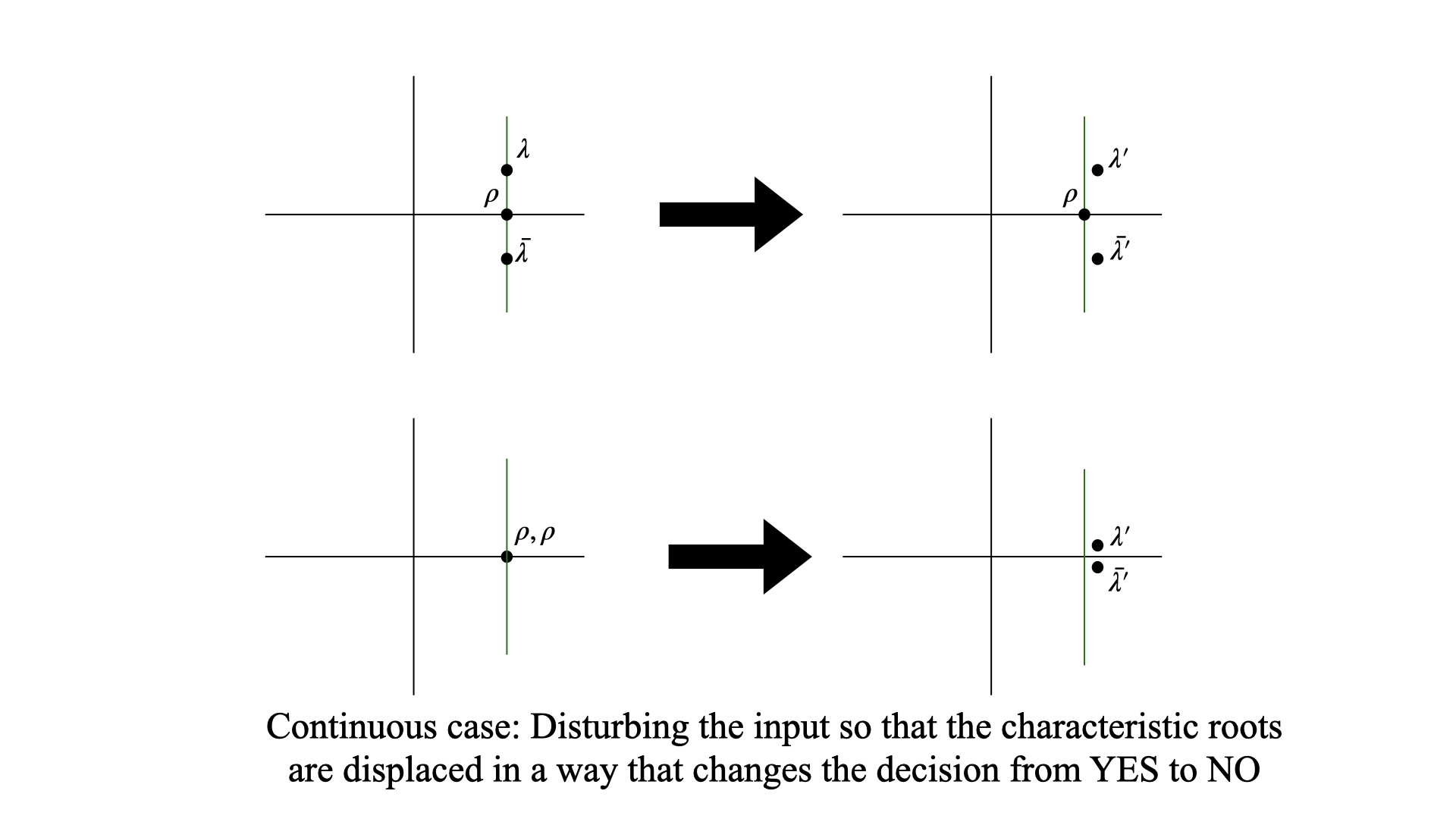}

\caption{The discrete case is conceptually similar.}
\label{fig:robustyes}
\end{figure}

In the discrete case, $\chi_c$ must have a root $\rho$ such that:
\begin{enumerate}
\item $\rho \in \reals$ and $\rho > 0$.
\item $\rho$ has multiplicity $1$.
\item $\rho > |\lambda|$ for all other roots $\lambda$.
\item The coefficient of $\rho^n$ in expression \ref{eq:exppolydscombi} should be strictly positive.
\end{enumerate}
 
Notice the first three conditions are spectral properties, while the last depends on the initialisation. This will be a recurring theme observed in this subsection. Thus, the First Order Formula we seek is of the form
\begin{equation}
\label{eq:robustyes}
\exists \rho.~ \text{spectral}(\rho) ~\land ~\text{initial}(\rho).
\end{equation}
The initial clause (Point 4) is non-trivial, and hence deferred to later sections. The spectral part (Points 1-3), on the other hand, is easy to encode:
$$
\chi_c(\rho) = 0~\land~ \rho > 0  ~\land~ \chi_c^{(1)}(\rho) \ne 0 ~\land~ (\forall x, y.~ \chi_c(x + yi) = 0 \Rightarrow x^2 + y^2 < \rho^2).
$$
Note that we will actually reason about complex input to $\chi_c$ by expanding each of the binomial terms $(x + yi )^k$, and observing that the real and imaginary parts of $\chi_c(x+yi)$ are real valued polynomials\footnotemark ~in $x, y$. The requirements in the continuous case are similar:
\begin{enumerate}
\item $\rho \in \reals$.
\item $\rho$ has multiplicity $1$.
\item $\rho > \Re(\lambda)$ for all other roots $\lambda$.
\item The coefficient of $e^{\rho t}$ in expression \ref{eq:exppolyct} should be strictly positive.
\end{enumerate}

\footnotetext{For example, $c_k z^k + c_{k-1}z^{k-1} + \dots + c_0 = 0$ can be written as $c_k(x + yi)^k + c_{k-1}(x+ yi)^{k-1} + \dots + c_0 = 0$. On expanding each of these binomial terms, we get $$\left(\sum_{j=0}^k c_j \sum_{\ell=0}^{j/2} (-1)^\ell\binom{j}{2\ell}x^{j-2\ell} y^{2\ell}\right) + i\left(\sum_{j=0}^k c_j \sum_{\ell=0}^{j/2} (-1)^\ell\binom{j}{2\ell+1} x^{j-2\ell-1} y^{2\ell+1}\right) = 0.$$ Our desired First Order Theory of the Reals formula is obtained by equating each of the real parenthesised summations to zero.}

\subsubsection{Robust NO}
Ultimate Positivity fails when the leading term oscillates, or is strictly negative. For robust NO instances, we must describe necessary and sufficient conditions for when the above holds despite perturbations to the input. The overall intuition is summarised in Figure \ref{fig:robustinstances}.

\begin{figure}[h]
\centering
\includegraphics[width=0.85\textwidth]{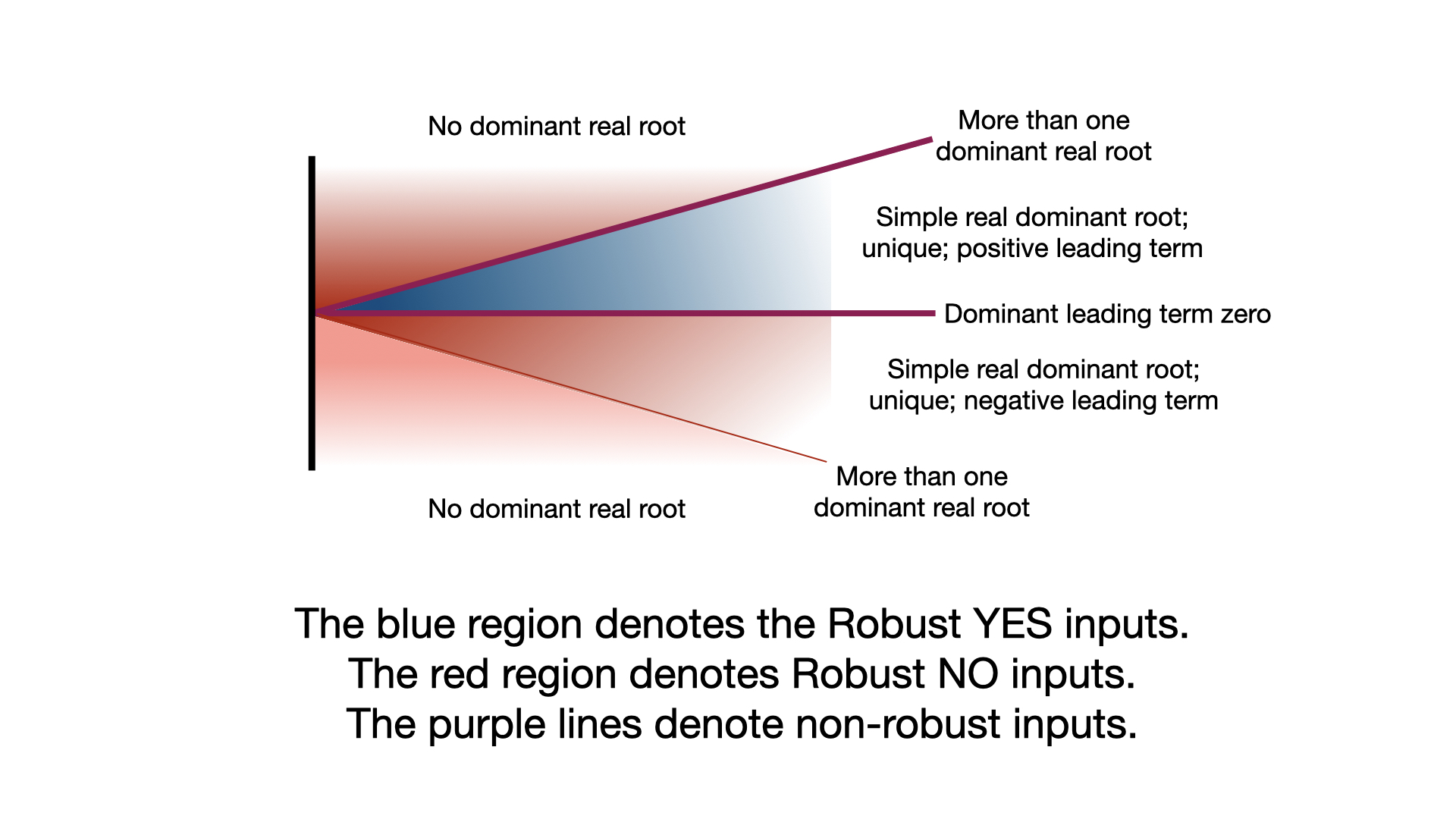}

\caption{Overall intuition.}
\label{fig:robustinstances}
\end{figure}

In the discrete case, at least one of the following two conditions must be met:
\begin{enumerate}
\item The leading term oscillates when $\chi_c$ has a root $\lambda$ such that
\begin{enumerate}
\item $\lambda$ is real and strictly negative, or complex with a nonzero imaginary part.
\item $|\lambda| > \rho$ for all real positive roots $\rho$.
\item $\lambda^n$ is multiplied with a nonzero polynomial in expression \ref{eq:exppolydscombi}.
\end{enumerate}
\item The leading term is strictly negative, and does not become positive (i.e. remains strictly negative, or oscillates) on disturbing the input when $\chi_c$ has a root $\lambda_1= \rho$ such that
\begin{enumerate}
\item $\rho \in \reals$ and $\rho > 0$.
\item $\rho > \lambda$ for all other real positive roots $\lambda$.
\item The leading coefficient of the polynomial multiplied with $\rho^n$ in expression \ref{eq:exppolydscombi} is strictly negative, i.e. $b_{1(m_1-1)} < 0$.
\end{enumerate}
\end{enumerate}

As we did in formula \ref{eq:robustyes}, we will encode the requirements as
\begin{equation}
\label{eq:robustno}
(\exists x, y. ~\text{spectral}_{1}(x, y) ~\land ~\text{initial}_1(x, y)) ~\lor ~ (\exists \rho.~\text{spectral}_2(\rho)~\land ~\text{initial}_2(\rho))
\end{equation}
and discuss the initial clauses in a later section. It is easy to see
$$
\text{spectral}_1(x, y) \equiv ( x < 0 \lor y \ne 0) ~\land~ (\chi_c(x + yi) = 0) ~\land ~(\forall \rho > 0. ~\chi_c(\rho) = 0 \Rightarrow x^2 + y^2 > \rho^2)
$$
and 
$$
\text{spectral}_2(\rho) \equiv \rho > 0 ~\land~ \forall \lambda.~ (\chi_c(\lambda) = 0 \Rightarrow \lambda \le \rho).
$$

Similarly in the continuous case, at least one of the following two conditions must be met:
\begin{enumerate}
\item The leading term oscillates when $\chi_c$ has a root $\lambda$ such that
\begin{enumerate}
\item $\lambda$ is complex with a nonzero imaginary part.
\item $\Re(\lambda) > \rho$ for all real positive roots $\rho$.
\item $e^{\lambda t}$ is multiplied with a nonzero polynomial in expression \ref{eq:exppolydscombi}.
\end{enumerate}
\item The leading term is strictly negative, and does not become positive (i.e. remains strictly negative, or oscillates) on disturbing the input when $\chi_c$ has a root $\lambda_1= \rho$ such that
\begin{enumerate}
\item $\rho \in \reals$.
\item $\rho > \lambda$ for all other real roots $\lambda$.
\item The leading coefficient of the polynomial multiplied with $e^{\rho t} $ in expression \ref{eq:exppolydscombi} is strictly negative, i.e. $b_{1(m_1-1)} < 0$.
\end{enumerate}
\end{enumerate}

\subsection{Generating Functions}
\label{section:genfunc}
We now use well-known control-theoretic generating functions for a transformed perspective to encode the remaining initial conditions in the First Order Theory of the Reals.
\begin{definition}[Z-transform]
For a discrete sequence $u[n]$, its Z-transform $U(z) = \mathcal{Z}[u[n]]$ is defined as
$$
U(z) = \sum_{n=0}^\infty z^{-n}u[n]
$$
\end{definition}

\begin{definition}[Laplace transform]
For a continuous function $u(t)$, its Laplace transform $U(z) = \mathcal{L}[u(t)]$ is defined as
$$
U(z) = \int_0^\infty e^{-zt}u(t)dt
$$
\end{definition}

The Z-transform and Laplace transform operate \textit{linearly} on functions. In the following lemmata, we recall some elementary properties of these generating functions that make them useful in our setting.

\begin{lemma}[Properties of the Z-transform]
Let $u[n]$ be a discrete sequence, and let $U(z) = \mathcal{Z}[u[n]]$
\begin{itemize}
\item $$\mathcal{Z}[u[n+k]] = z^kU(z) - z^k\sum_{j=0}^{k-1}z^{-j} u[j]$$
\item $$\mathcal{Z}\left[\binom{n+m}{m}a^n\right] = \frac{z^{m+1}}{(z - a)^{m+1}}$$
\end{itemize}
\end{lemma}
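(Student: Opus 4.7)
The plan is to establish both identities directly from the definition of the Z-transform, treating them as formal power series identities; convergence holds for $|z|$ sufficiently large and is routine, so I would only flag it in passing.

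For the shift identity, I would start from $\mathcal{Z}[u[n+k]] = \sum_{n=0}^\infty z^{-n} u[n+k]$ and re-index with $m = n+k$. This yields
$$
\sum_{m=k}^\infty z^{-(m-k)} u[m] \;=\; z^k \sum_{m=k}^\infty z^{-m} u[m].
$$
I would then complete the sum down to $m = 0$ by adding and subtracting the missing initial terms, obtaining $z^k \bigl( U(z) - \sum_{j=0}^{k-1} z^{-j} u[j] \bigr)$, which rearranges into the claimed formula.

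For the binomial-power identity, the most transparent route is via the generalised binomial series $(1-x)^{-(m+1)} = \sum_{n=0}^\infty \binom{n+m}{m} x^n$, a standard identity proved by induction on $m$ using Pascal's rule $\binom{n+m+1}{m+1} = \binom{n+m}{m+1} + \binom{n+m}{m}$. Substituting $x = a/z$ (valid for $|z| > |a|$) gives
$$
\mathcal{Z}\!\left[\binom{n+m}{m} a^n\right] \;=\; \sum_{n=0}^\infty \binom{n+m}{m} a^n z^{-n} \;=\; \frac{1}{(1 - a/z)^{m+1}} \;=\; \frac{z^{m+1}}{(z-a)^{m+1}},
$$
which is exactly the stated expression.

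There is no genuine obstacle: both arguments are purely computational re-indexings of absolutely convergent series. The only mild care needed is to ensure $|z|$ is large enough for the series defining $U(z)$ and for $|a/z| < 1$, but since the Z-transform in this paper is used as a meromorphic/rational object indicating the asymptotic structure of $u[n]$, this is implicit. An alternative for the second identity, which I would mention but not pursue, is induction on $m$ starting from $\mathcal{Z}[a^n] = z/(z-a)$ and differentiating with respect to $a$; the generating-function derivation above is cleaner and avoids chasing factorial normalisations.
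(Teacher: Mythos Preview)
Your argument is correct on both counts: the re-indexing for the shift identity and the negative-binomial expansion for the second identity are the standard derivations, and the convergence caveats you note are appropriate.

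The paper itself does not prove this lemma. It introduces it with ``we recall some elementary properties of these generating functions'' and states the two identities without justification, treating them as textbook facts about the Z-transform. So there is no approach to compare against; your write-up supplies exactly the routine verification the paper omits.
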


\begin{lemma}[Properties of the Laplace transform]
Let $u(t)$ be analytic over $\reals$, and let $U(z) = \mathcal{L}[u(t)]$
\begin{itemize}
\item $$\mathcal{L}[u^{(k)}(t)] = z^kU(z) - \sum_{j=1}^{k}z^{k-j} u^{(j-1)}(0)$$
\item $$\mathcal{L}\left[t^me^{at}\right] = \frac{m!}{(z - a)^{m+1}}$$
\end{itemize}
\end{lemma}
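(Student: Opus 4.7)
The plan is to prove the two identities separately: the first by integration by parts combined with induction on the order $k$ of the derivative, and the second by direct evaluation of the defining integral using a substitution that reduces it to a Gamma function.

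First I would establish the base case of the derivative identity, namely $\mathcal{L}[u'(t)] = zU(z) - u(0)$. Applying integration by parts to $\int_0^\infty e^{-zt}u'(t)\,dt$ with antiderivative $u(t)$ and differentiating the exponential yields the boundary term $e^{-zt}u(t)\big|_0^\infty = -u(0)$ together with $z\int_0^\infty e^{-zt}u(t)\,dt = zU(z)$. Here I am implicitly assuming that $\Re(z)$ is large enough that $e^{-zt}u(t) \to 0$ as $t \to \infty$; for C-finite $u$ this is automatic once $\Re(z)$ exceeds the largest real part of any characteristic root. The inductive step is then purely algebraic: applying the base case to $u^{(k-1)}$ gives
$$
\mathcal{L}[u^{(k)}(t)] = z\,\mathcal{L}[u^{(k-1)}(t)] - u^{(k-1)}(0),
$$
and substituting the inductive hypothesis $\mathcal{L}[u^{(k-1)}(t)] = z^{k-1}U(z) - \sum_{j=1}^{k-1}z^{k-1-j}u^{(j-1)}(0)$ reindexes into the claimed closed form.

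For the second identity, I would evaluate $\int_0^\infty t^m e^{-(z-a)t}\,dt$ directly. The substitution $s = (z-a)t$, valid for $\Re(z-a) > 0$, yields
$$
\int_0^\infty \frac{s^m}{(z-a)^m}\,e^{-s}\,\frac{ds}{z-a} = \frac{1}{(z-a)^{m+1}}\int_0^\infty s^m e^{-s}\,ds = \frac{m!}{(z-a)^{m+1}},
$$
using the standard Gamma integral $\Gamma(m+1) = m!$. For complex $a$ one interprets the contour deformation in the usual way; analyticity in $z$ on the half-plane $\Re(z) > \Re(a)$ justifies treating $z-a$ symbolically.

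The only genuine subtlety, and the point I expect to require the most care, is the convergence region. Both identities only make sense on a right half-plane determined by the growth of $u$, and the boundary term in the integration by parts silently relies on decay there. Since the paper's use of the Laplace transform is for C-finite $u(t)$, whose exponential polynomial form has a well-defined abscissa of convergence equal to the maximum $\Re(\lambda_i)$, this is easily handled by restricting attention to sufficiently large $\Re(z)$, after which the identities extend by analytic continuation to the natural domains on which both sides are rational functions of $z$.
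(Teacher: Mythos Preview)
Your proof is correct. The paper does not actually prove this lemma: it is stated as a recall of ``elementary properties'' of the Laplace transform and left without proof, so there is nothing to compare against beyond noting that your integration-by-parts induction and Gamma-integral computation are the standard textbook arguments.
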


On applying the first bullets of the respective lemmata on $(c, v)$: the recurrence or differential equation and the initialisation, we observe that
\begin{equation}
\label{eq:Ztf}
\mathcal{Z}[u[n]] = \frac{\psi_{c,v}(z)}{\chi_c(z)}
\end{equation}
\begin{equation}
\label{eq:Ltf}
\mathcal{L}[u(t)] = \frac{\phi_{c,v}(z)}{\chi_c(z)}.
\end{equation}
The computation of the coefficients $\psi_{c,v}$ and $\phi_{c,v}$ from $(c, v)$ is thus straightforward. On applying the second bullets of the respective lemmata on the exponential polynomial expressions \ref{eq:exppolydscombi} and \ref{eq:exppolyct}, we observe that
\begin{equation}
\label{eq:Ztfalt}
\mathcal{Z}[u[n]] = \sum_{i}\sum_{j=0}^{m_i - 1} \frac{b_{ij}z^{j+1}}{(z-\lambda_i)^{j+1}}
\end{equation}

\begin{equation}
\label{eq:Ltfalt}
\mathcal{L}[u(t)] = \sum_{i}\sum_{j=0}^{m_i - 1} \frac{b_{ij}\cdot j!}{(z-\lambda_i)^{j+1}}
\end{equation}

\section{Encoding leading term requirements in First Order Logic}
\label{section:leadterm}
\subsection{Coefficient sign}
We first express the clauses $\text{initial}(\rho)$ from formula \ref{eq:robustyes} and $\text{initial}_2(\rho)$ from formula \ref{eq:robustno}. Indeed, a direct corollary of the following lemma is that 
\begin{equation}
\text{initial}(\rho) \equiv \psi_{c,v}(\rho) > 0 ;~
\text{initial}_2(\rho) \equiv \psi_{c,v}(\rho) < 0
\end{equation}
in the discrete case, and
\begin{equation}
\text{initial}(\rho) \equiv \phi_{c,v}(\rho) > 0; ~
\text{initial}_2(\rho) \equiv \phi_{c,v}(\rho) < 0
\end{equation}
in the continuous case.
\begin{lemma}
Let $\lambda_1 = \rho$ be a real strictly positive (resp.\ real) root of $\chi_c$ such that for all other real roots $\lambda$ of $\chi_c$, $\rho > \lambda$. Then, the leading term coefficient $b_{1(m_1-1)}$ in expressions \ref{eq:exppolydscombi} (resp.\ \ref{eq:exppolyct}) has the same sign as $\psi_{c,v}(\rho)$ (resp.\ $\phi_{c,v}(\rho)$).
\end{lemma}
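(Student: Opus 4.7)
The plan is to extract $b_{1(m_1-1)}$ by a residue-style computation that exploits the two equivalent expressions for the generating function derived in Section \ref{section:genfunc}. First, factor the characteristic polynomial as $\chi_c(z) = (z-\rho)^{m_1} q(z)$, where $q(z) = \prod_{i \ne 1}(z-\lambda_i)^{m_i}$ and $q(\rho) \ne 0$. Equating expressions \ref{eq:Ztf} and \ref{eq:Ztfalt} (respectively \ref{eq:Ltf} and \ref{eq:Ltfalt}), multiplying through by $(z-\rho)^{m_1}$, and letting $z \to \rho$ should isolate the leading coefficient: on the right-hand side, only the summand corresponding to $(i,j) = (1, m_1-1)$ has a pole of full order $m_1$ at $\rho$, so every other term tends to $0$ in this limit.

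This manoeuvre is expected to yield
$$
b_{1(m_1-1)}\, \rho^{m_1} \;=\; \frac{\psi_{c,v}(\rho)}{q(\rho)} \quad \text{(discrete)}, \qquad b_{1(m_1-1)}\,(m_1-1)! \;=\; \frac{\phi_{c,v}(\rho)}{q(\rho)} \quad \text{(continuous)}.
$$
It then remains to check that the factors multiplying $b_{1(m_1-1)}$ on the left and the denominator $q(\rho)$ on the right are strictly positive, whence the sign relation claimed by the lemma follows. For the factors on the left, $\rho^{m_1} > 0$ since $\rho > 0$ in the discrete setting, while $(m_1-1)!$ is trivially positive in the continuous setting. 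For $q(\rho)$, I would group the nonreal roots of $\chi_c$ into complex-conjugate pairs, each contributing a factor of the form $|\rho-\lambda|^{2m} > 0$, and observe that every real factor $(\rho - \lambda)^{m}$ with $\lambda \ne \rho$ is positive because the dominance hypothesis $\rho > \lambda$ forces $\rho - \lambda > 0$. This is precisely where the assumption that $\rho$ dominates the other real roots is essential: without it, an odd-multiplicity real root exceeding $\rho$ could flip the sign of $q(\rho)$.

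No substantial difficulty is anticipated. The most delicate point is the pole-isolation step: one must verify that multiplying by $(z-\rho)^{m_1}$ annihilates, in the limit, every summand of the alternate expression except the one of matching pole order at $\rho$. In particular, the \emph{lower}-order terms indexed by $(1, j)$ for $j < m_1 - 1$, which share the root $\rho$, must be handled explicitly; after multiplication they retain a strictly positive power of $(z-\rho)$, so they vanish as $z \to \rho$. All remaining manipulations are routine algebra on the closed forms already established in Section \ref{section:genfunc}.
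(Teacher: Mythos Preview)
Your proposal is correct and follows essentially the same approach as the paper: both factor $\chi_c(z)=(z-\rho)^{m_1}q(z)$, equate the two expressions for the generating function, and arrive at the identical identity $\psi_{c,v}(\rho)=b_{1(m_1-1)}\rho^{m_1}q(\rho)$ (resp.\ $\phi_{c,v}(\rho)=b_{1(m_1-1)}(m_1-1)!\,q(\rho)$), then conclude via $q(\rho)>0$. The only cosmetic difference is that you isolate the leading coefficient by multiplying through by $(z-\rho)^{m_1}$ and taking a limit, whereas the paper clears denominators and compares numerators before substituting $z=\rho$; the algebraic content is the same.
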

\begin{proof}
We denote $b_{10}, \dots, b_{1j}, \dots, b_{1(m_1-1)}$ as $b_0, \dots, b_j, \dots, b_{m-1}$. We can simplify expressions \ref{eq:Ztfalt} and \ref{eq:Ltfalt} as 
\begin{equation}
\label{eq:decD}
\mathcal{Z}[u[n]] = \frac{b_{m-1}z^m}{(z-\rho)^m} + \dots + \frac{b_0z}{z-\rho} + \frac{p(z)}{q(z)}
\end{equation}

\begin{equation}
\label{eq:decC}
\mathcal{L}[u[n]] = \frac{b_{m-1}(m-1)!}{(z-\rho)^m} + \dots + \frac{b_0}{z-\rho} + \frac{p(z)}{q(z)}
\end{equation}

Here $(z-\rho)^mq(z) = \chi_c(z)$. We note that in either case, $q(z)$ is a monic polynomial, each of whose real roots is less than $\rho$. Thus, $q(\rho) > 0$. 

We can add the decomposed fractions together by taking the least common denominator, and compare with expressions \ref{eq:Ztf} and \ref{eq:Ltf}
\begin{equation}
\frac{\psi_{c,v}(z)}{\chi_c(z)} = \frac{q(z)(b_{m-1}z^m + b_{m-2}z^{m-1}(z-\rho)+\dots+b_0z(z-\rho)^{m-1}) + p(z)(z-\rho)^m}{\chi_c(z)}
\end{equation}
\begin{equation}
\frac{\phi_{c,v}(z)}{\chi_c(z)} = \frac{q(z)(b_{m-1}(m-1)! +\dots+b_0(z-\rho)^{m-1}) + p(z)(z-\rho)^m}{\chi_c(z)}
\end{equation}
It is thus easy to see that
\begin{equation}
\psi_{c,v}(\rho) = b_{m-1}\rho^m q(\rho)
\end{equation}
Here, we have established the positivity of $q(\rho)$, and $\rho$ is positive by premise. Thus the sign of $\psi_{c,v}(\rho)$ is the same as that of $b_{m-1}$. Similarly
\begin{equation}
\phi_{c,v}(\rho) = b_{m-1} (m-1)! q(\rho)
\end{equation}
By the same argument, the sign of $\phi_{c,v}(\rho)$ is the same as that of $b_{m-1}$.
\end{proof}

\subsection{Non-zero polynomial testing}

We complete the First Order encoding by showing how to express $\text{initial}_1(x, y)$ from formula \ref{eq:robustno}. Observe the structure of expressions \ref{eq:decD} and \ref{eq:decC}. Let root $\lambda$ have multiplicity $m$ and let its coefficients in the exponential polynomial solution be $b_0, b_1, \dots, b_{m-1}$. These coefficients are all $0$ if and only if the gcd of $\chi_c$ with $\psi_{c,v}$ (resp.\ $\phi_{c,v}$) is divisible by $(z - \lambda)^m$. For example
\begin{equation}
\label{eq:decD2}
\mathcal{Z}[u[n]] = \frac{b_{m-1}z^m}{(z-\lambda)^m} + \dots + \frac{b_0z}{z-\lambda} + \frac{p(z)}{q(z)} = \frac{p'(z)}{q'(z)} = \frac{\psi_{c,v}(z)}{\chi_c(z)}.
\end{equation}
We need to detect the \textit{negation}, that is, at least one of the coefficients is nonzero. In this case, $q'(z)$ will have $(z-\lambda)$ as a factor. This means that on cancelling out all the common factors with $\psi_{c, v}$ (or $\phi_{c, v}$ in the continuous case) from $\chi_c$, a factor of $(z-\lambda)$ still survives.

We use $*$ to denote the convolution of coefficient vectors, i.e. polynomial multiplication. We implicitly use the bounded degree of the recurrence to hardcode and define convolution and mimic evaluation at complex points. Our First Order formula $\text{initial}_1(x, y)$ has the structure:
\begin{align*}
\exists g, h_1, h_2~&. \\
\psi_{c, v} = g * h_1 ~&\land \\
\chi_c = g * h_2 ~\land 
h_2(x + yi) = 0 ~&\land \\
(\forall f. (\exists f_1, f_2. ~ \psi_{c,v} = f * f_1 \land \chi_c = f * f_2) \Rightarrow (\exists \ell. ~g = f*\ell)
\end{align*}
The above asserts that dividing $\chi_c$ by its gcd $g$ with $\psi_{c, v}$ still retains $\lambda = x + yi$ as a root of the quotient polynomial $h_2$, giving us our desired nonzero coefficients.

\section{Conclusion and Perspective}
\label{section:discussion}
The recent attention to robust variants of decision problems related to Linear Dynamical Systems had led to significant strides in circumventing the number-theoretic hardness that the conventional approach is fraught with, as well as in making decision procedures feasible. Indeed, the robust Ultimate Positivity problems \cite{originalarxiv} considers are shown decidable in $\mathsf{PSPACE}$. In more related work, \cite{neumann2023} works in the bit-model of real computation with the notion of robustness considered in this paper. It establishes a notion of polynomial-time efficiency for the detection of Robust YES and NO instances of Ultimate Positivity. It is our hope that our work serves to strengthen this result by offering it theoretical backing and more accessible intuition. We argue that this is indeed useful: for instance, in view of the semialgebraic characterisation, one can easily generalise the detection procedure to work on other models of real computation, e.g.\ the Blum-Shub-Smale model \cite{blumshubsmale}.
\clearpage
\bibliography{main}
\clearpage

\appendix

\end{document}